\documentclass[conference]{IEEEtran}
\IEEEoverridecommandlockouts
\usepackage[T1]{fontenc}
\usepackage{cite}
\usepackage{amsthm}
\usepackage{bm}
\usepackage{amsmath}
\usepackage{float}
\usepackage{setspace}
\usepackage{amssymb}
\usepackage{stfloats}
\usepackage{cite}
\usepackage{ragged2e}
\usepackage[ruled,vlined,linesnumbered]{algorithm2e}
\usepackage{amsfonts}
\usepackage{mathrsfs}
\usepackage{amsmath,amsthm}
\usepackage{array,booktabs}
\usepackage{subfigure}
\usepackage{multirow}
\usepackage{cuted}
\usepackage{multicol}
\usepackage{graphicx}
\usepackage{subfigure}
\usepackage{graphicx,xcolor,bm}
\usepackage[bookmarks=false]{hyperref}
\usepackage{threeparttable}
\usepackage{dcolumn}
\usepackage{setspace}
\usepackage{makecell}
\usepackage{lipsum}
\usepackage{enumerate}
\usepackage{mathrsfs}
\usepackage{bbm}

\newtheorem{Prop}{Proposition}
\newtheorem{thm}{Theorem}

\usepackage{cite,bm}
\graphicspath{{figures/}}
\def\BibTeX{{\rm B\kern-.05em{\sc i\kern-.025em b}\kern-.08em
  T\kern-.1667em\lower.7ex\hbox{E}\kern-.125emX}}
\begin{document}
	\title{AEGIS: Risk-Budgeted Online Scheduling for Resilient Continuous Edge Inference}
	
	\author{
		\IEEEauthorblockN{
			Houyi Qi\IEEEauthorrefmark{1},
			Minghui Liwang\IEEEauthorrefmark{1},
			Sai Zou\IEEEauthorrefmark{2},
			Wei Ni\IEEEauthorrefmark{3}
		}
		\IEEEauthorblockA
		{
			\IEEEauthorrefmark{1}~ Shanghai Research Institute for
			Intelligent Autonomous Systems, Tongji University, Shanghai, China \\
		}
		\IEEEauthorblockA
		{
			\IEEEauthorrefmark{2}~ College of Big Data and Information Engineering, Guizhou University, Guizhou, China \\
		}
		\IEEEauthorblockA
		{
			\IEEEauthorrefmark{3}~ School of Engineering, Edith Cowan University, Perth, Australia \\
		}
		
		Email: \{houyiqi@tongji.edu.cn, minghuiliwang@tongji.edu.cn, dr-zousai@foxmail.com, Wei.Ni@ieee.org\}
	}
	
	\maketitle
	
\begin{abstract}
	Continuous edge inference requires sustained wireless and computing support across successive service instances. Under recurring channel degradation, transient edge overload, and multi-user contention, isolated deadline misses may accumulate into persistent service degradation. Existing schedulers mainly optimize instantaneous latency or per-timeslot utility and provide limited control over such cross-time effects. To address this issue, we propose \textit{AEGIS} (\textit{Adaptive Exposure-Governed Inference Scheduling}), a risk-budgeted online framework for service-level operational resilience. AEGIS regulates predicted deadline-risk exposure through dynamically replenished per-user risk budgets and establishes an explicit finite-horizon bound on cumulative admitted-risk exposure. One-step state estimation supports anticipatory delay and risk assessment, while the centralized bandwidth--computing allocation is transformed into an exact-potential formulation and solved through asynchronous coordinate updates. Simulation results demonstrate that AEGIS enhances timely-service continuity, contains persistent deadline violations, and improves post-stress recovery through adaptive cross-time risk regulation. Meanwhile, it effectively controls predicted-risk exposure while preserving competitive service performance, achieving a favorable balance between service resilience and risk control.
\end{abstract}
	
\begin{IEEEkeywords}
	Wireless network resilience, continuous edge inference, risk-budgeted scheduling, communication--computing orchestration
\end{IEEEkeywords}
	
	\section{Introduction}
Next-generation wireless networks are evolving from connectivity infrastructures into intelligent service platforms that continuously support applications such as video analytics, augmented-reality interaction, mobile sensing, and real-time intelligent assistance~\cite{el2025collaborative}. Their operation relies jointly on persistent wireless access and edge-computing availability, making service quality vulnerable to recurring channel degradation, workload surges, and multi-user resource contention. Consequently, resilience in future wireless systems should extend beyond maintaining connectivity toward sustaining critical service functionality under time-varying disruptions~\cite{mahmood2025resilient,reifert2023comeback}. This requirement is particularly important for continuous edge inference, where users repeatedly generate deadline-sensitive tasks and adverse communication--computing conditions may cause deadline misses to persist across successive service instances. Such temporal accumulation can transform isolated task failures into prolonged service degradation even without infrastructure outages. We therefore consider \emph{service-level operational resilience}, namely, the capability of online orchestration to preserve timely service under disruption, contain the persistence of deadline violations, and recover service capability as communication and computing conditions improve.

Existing studies on wireless edge inference largely follow a performance-oriented resource-optimization paradigm, including joint model splitting and resource allocation~\cite{Li2025Optimal}, batching and early exiting~\cite{liu2023resource}, timely communication--computation co-scheduling~\cite{shisher2025computation}, service-level objective (SLO)-aware scheduling~\cite{zhang2026enabling}, and learning-based orchestration~\cite{fan2025vehicular}. Although effective in improving latency, SLO satisfaction, or per-timeslot utility, these methods mainly optimize nominal or instantaneous performance and provide limited control over cross-time service degradation. Meanwhile, wireless-resilience studies have considered outage- and criticality-aware adaptation~\cite{reifert2023comeback}, but rarely address accumulated deadline-risk exposure in persistent edge-intelligence services. This leaves a gap between efficient edge inference and resilient orchestration, motivating a cross-time control state that records degradation exposure, recovers during benign periods, and regulates subsequent bandwidth--computing admission.

To address this gap, we propose \textit{AEGIS}, an \textit{Adaptive Exposure-Governed Inference Scheduling} framework for resilient continuous edge inference. AEGIS couples cross-time risk regulation with online communication--computing orchestration, enabling the scheduler to account for accumulated service degradation while adapting resource allocation to evolving network conditions. The main contributions are summarized as follows.

\noindent
$\bullet$ We formulate continuous edge inference from a service-level operational-resilience perspective, where recurring communication and computing stress may cause deadline failures to accumulate across tasks. We introduce a cross-time risk-budget model and derive a finite-horizon upper bound on the cumulative predicted-risk exposure of admitted tasks.

\noindent
$\bullet$ We develop an online bandwidth--computing scheduling framework that incorporates the evolving risk state into resource coordination. The resulting discrete scheduling problem admits an exact-potential structure, for which we establish equilibrium existence and finite-step convergence of the asynchronous improvement process.

\noindent
$\bullet$ We conduct extensive evaluations under varying system scales, resource conditions, and dynamic stress. The results demonstrate that AEGIS improves service continuity and limits persistent deadline degradation while maintaining a favorable balance between timely-service performance and risk exposure.
	
\section{System Model and Resilience Objective}

\begin{figure}[!t]
	\vspace{-0.0cm}
	\setlength{\abovecaptionskip}{-1 mm}
	\centering
	\includegraphics[width=1\columnwidth]{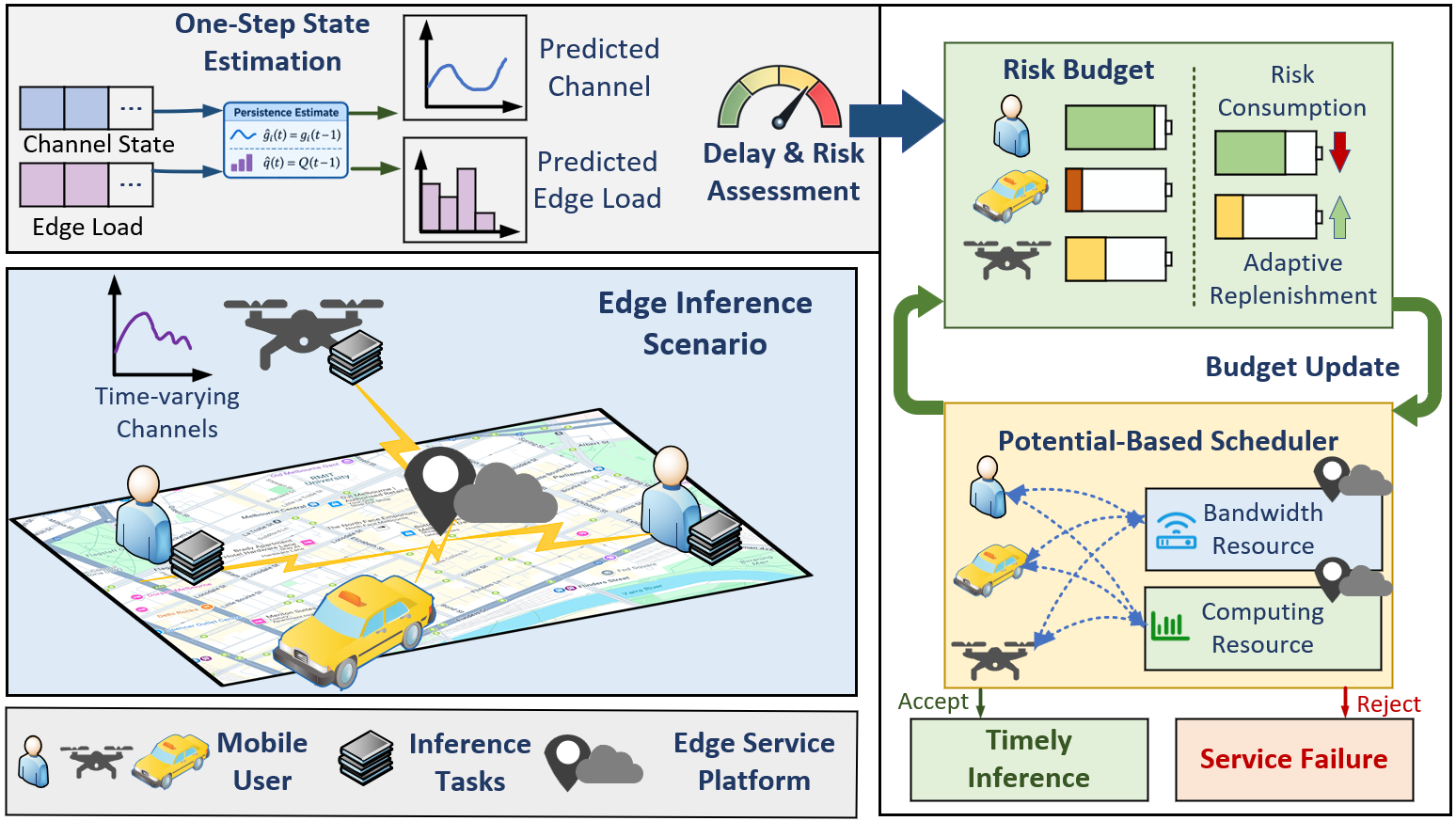}
	\caption{Architecture and closed-loop operation of AEGIS for resilient continuous edge inference, including one-step state estimation, predictive delay--risk assessment, adaptive risk-budget regulation, and potential-based communication--computing scheduling.}
	\vspace{-0.6cm}\label{fig:system_model}
\end{figure}

As shown in Fig.~\ref{fig:system_model}, we consider a wireless edge-inference system in which mobile users (MUs) continuously generate deadline-sensitive inference tasks and request service from an edge service platform (ESP). The ESP centrally determines task admission and jointly allocates uplink bandwidth and edge-computing resources. Under recurring channel degradation, transient edge overload, and multi-user contention, deadline failures may persist across successive tasks and develop into prolonged service degradation. AEGIS addresses this temporal effect through a closed-loop scheduling architecture: one-step estimates of channel and edge-load states support predictive delay and risk assessment, while replenishable user-level risk budgets regulate accumulated risk exposure across time. Based on the current risk states, a potential-based scheduler coordinates communication and computing resources and updates the budgets according to the resulting service decisions. Through this cross-time feedback, AEGIS aims to preserve timely-service capability, contain persistent service failures, and restore scheduling flexibility as adverse conditions subside.

\subsection{Continuous Edge Inference Task Model}

Let $\mathcal{U}=\{u_1,\dots,u_N\}$ denote the set of MUs. We consider a discrete time horizon $\mathcal{T}=\{1,\dots,T\}$, where each timeslot has an equal duration of $\tau$. To capture the temporal variation of inference demand, let $\chi_i(t)\in\{0,1\}$ denote the task-arrival indicator of MU $u_i$ in timeslot $t$, where $\chi_i(t)=1$ indicates the arrival of an active inference task and $\chi_i(t)=0$ otherwise. Accordingly, the set of active MUs is defined as $\mathcal{U}^{\mathrm{A}}(t)=\{u_i\in\mathcal{U}\mid\chi_i(t)=1\}$.
For each active MU $u_i\in\mathcal{U}^{\mathrm{A}}(t)$, its inference task is characterized by $\mathcal{J}_i(t)=\big(L_i(t),C_i(t),D_i^{\max}(t),w_i\big)$, where $L_i(t)$ denotes the input data size, $C_i(t)$ the required computing workload, $D_i^{\max}(t)$ the task deadline, and $w_i$ the service weight. The ESP determines the allocation action $a_i(t)=\big(b_i(t),f_i(t)\big)$, where $b_i(t)$ and $f_i(t)$ are the allocated uplink bandwidth and computing rate, respectively. Let $\tilde{\mathcal B}_i^{+}$ and $\tilde{\mathcal F}_i^{+}$ denote the positive candidate bandwidth and computing levels. The valid action set is
\begin{equation}
	\mathcal A_i(t)=\{(0,0)\}\cup\{(b,f)\mid b\in\tilde{\mathcal B}_i^{+},\ f\in\tilde{\mathcal F}_i^{+}\}.
	\label{eq:valid_action_set}
\end{equation}
The null action $(0,0)$ denotes task rejection, whereas each admitted action satisfies $b_i(t)>0$ and $f_i(t)>0$. The corresponding admission indicator is
\begin{equation}
	x_i(t)=\mathbbm{1}_{\left\{b_i(t)>0\ \wedge\ f_i(t)>0\right\}}.
\end{equation}

Each active MU generates at most one inference task at the beginning of each timeslot, with $D_i^{\max}(t)\le\tau$. An admitted task must therefore complete within its deadline in the generating timeslot. A task that is rejected or misses its deadline is counted as a service failure and is discarded rather than carried over as residual workload. Hence, consecutive deadline violations represent failures of successive tasks generated by the same MU, which enables the system to characterize persistent service degradation across time.

\subsection{Delay and Timely Service Model}

For each admitted MU $u_i\in\mathcal{U}^{\mathrm{A}}(t)$, let $b_i(t)$ and $f_i(t)$ denote its allocated uplink bandwidth and computing rate, respectively. We assume orthogonal uplink bandwidth allocation among simultaneously admitted MUs. The wireless link varies over time according to
$g_i(t)=g_0\zeta_i(t)^{-\alpha_{\mathrm{pl}}}|h_i(t)|^2$,
where $\zeta_i(t)$ is the MU--ESP distance, $g_0$ is the reference-distance channel gain, and $\alpha_{\mathrm{pl}}$ is the path-loss exponent. To capture temporal channel correlation, the small-scale fading coefficient evolves as
$h_i(t)=\varrho h_i(t-1)+\sqrt{1-\varrho^2}\,e_i(t)$,
where $e_i(t)\sim\mathcal{CN}(0,1)$ and $\varrho\in[0,1)$ denotes the correlation coefficient.

Given transmit power $p_i$ and noise power spectral density $N_0$, the uplink signal-to-noise ratio (SNR) and transmission rate are
\begin{equation}
	\mathrm{SNR}_i(t)=\frac{p_i g_i(t)}{N_0 b_i(t)},~
	R_i(t)=b_i(t)\log_2\!\left(1+\mathrm{SNR}_i(t)\right).
	\label{eq:wireless_rate}
\end{equation}
The corresponding transmission delay is $d_i^{\mathrm{tx}}(t)=L_i(t)/R_i(t)$.
After transmission, the task may experience waiting and execution delays at the ESP. Let $Q(t)$ denote the background computing workload, measured in CPU cycles, that is already queued before the newly arriving inference tasks. Given the aggregate edge-computing capacity $F^{\mathrm{tot}}$, we approximate the resulting pre-service waiting time as
$d^{\mathrm{queue}}(t)=Q(t)/(F^{\mathrm{tot}}+\epsilon_f)$,
where $\epsilon_f>0$ provides numerical protection. Since $Q(t)$ represents workload already present before the current inference requests arrive, this waiting component is determined by the observed edge-load state rather than by the resource division among the newly admitted tasks. Once task of $u_i$ enters execution, its computing delay is $d_i^{\mathrm{comp}}(t)=C_i(t)/f_i(t)$.

Combining the communication and computing components, the end-to-end (E2E) delay of an admitted task is
\begin{equation}
	d_i(t)=d_i^{\mathrm{tx}}(t)+d^{\mathrm{queue}}(t)+d_i^{\mathrm{comp}}(t).
	\label{eq:e2e_delay_revised}
\end{equation}
Here, $Q(t)$ represents the background computing workload observed at the ESP before the current inference tasks enter service and determines their common pre-service waiting delay. An active task is successfully served if its E2E delay satisfies the deadline requirement, i.e., $\Psi_i(t)=\mathbbm{1}_{\{d_i(t)\le D_i^{\max}(t)\}}$. Inactive MUs with $\chi_i(t)=0$ do not participate in the current scheduling process and are assigned zero resources. For an active task that is not admitted, we set $\Psi_i(t)=0$, indicating a service failure in the current timeslot.
	
\subsection{Predictive Delay and Risk Assessment}

At the beginning of timeslot $t$, the ESP estimates the upcoming channel and edge-load states using observations available up to timeslot $t-1$. Considering the temporal correlation of both processes, we adopt one-step persistence estimates $\hat g_i(t)=g_i(t-1)$ and $\hat q(t)=Q(t-1)$. Based on these estimates, the ESP predicts the E2E delay associated with each candidate resource allocation.
Let $\bm a(t)=\big(a_1(t),\ldots,a_{|\mathcal U|}(t)\big)$ denote a candidate joint allocation profile. Based on the estimated channel and background-load states, the predicted E2E delay of active MU $u_i$ is
\begin{equation}
	\hat d_i(\bm a(t);t)
	=
	\hat d_i^{\mathrm{tx}}(\bm a(t);t)
	+
	\hat d^{\mathrm{queue}}(t)
	+
	\hat d_i^{\mathrm{comp}}(\bm a(t);t).
	\label{eq:predicted_delay}
\end{equation}
Specifically, the estimated channel state yields the predicted SNR
$\widehat{\mathrm{SNR}}_i(t)=p_i\hat g_i(t)/(N_0b_i(t))$,
and hence the predicted transmission delay is
$\hat d_i^{\mathrm{tx}}(\bm a(t);t)=L_i(t)/[b_i(t)\log_2(1+\widehat{\mathrm{SNR}}_i(t))]$.
The estimated background workload determines the predicted waiting delay
$\hat d^{\mathrm{queue}}(t)=\hat q(t)/(F^{\mathrm{tot}}+\epsilon_f)$,
while the allocated computing rate determines the predicted execution delay
$\hat d_i^{\mathrm{comp}}(\bm a(t);t)=C_i(t)/f_i(t)$.

To quantify the deadline feasibility of each candidate allocation, we first define the normalized deadline margin as
$\delta_i(\bm a(t);t)=[D_i^{\max}(t)-\hat d_i(\bm a(t);t)]/D_i^{\max}(t)$.
A positive margin indicates predicted completion before the deadline, whereas a smaller or negative value reflects increasing deadline pressure. We then transform this margin into a bounded deadline-risk score:
\begin{equation}
	\hat r_i(\bm a(t);t)
	=
	\frac{1}{1+\exp\!\big(\kappa_i\delta_i(\bm a(t);t)\big)},
	\label{eq:risk_sigmoid_new}
\end{equation}
where $\kappa_i>0$ controls the sensitivity of the risk score to the deadline margin. Accordingly, a larger $\hat r_i(\bm a(t);t)$ indicates less favorable predicted timeliness, while $\hat s_i(\bm a(t);t)=1-\hat r_i(\bm a(t);t)$ denotes the corresponding predicted timeliness score. Here, $\hat r_i(\bm a(t);t)$ is a normalized scheduling metric rather than a probabilistic estimate of deadline violation. The risk score is evaluated only for admitted actions. For subsequent optimization, let $r_i^{\mathrm{adm}}(\bm a(t);t)$ denote the admitted-risk contribution, which equals $\hat r_i(\bm a(t);t)$ for an admitted action and zero for the null action $(0,0)$. The corresponding predicted timeliness score is $\hat s_i(\bm a(t);t)=1-\hat r_i(\bm a(t);t)$ for an admitted action and zero for the null action.

To capture the recent operating risk beyond individual task evaluations, AEGIS constructs a system-level reference risk using a common resource configuration $a^{\mathrm{ref}}$. Let $\hat r_i^{\mathrm{ref}}(t)$ denote the predicted risk of active MU $u_i$ under this reference configuration. The mean reference risk is
\[
z(t)=\frac{1}{|\mathcal U^{\mathrm A}(t)|}
\sum_{u_i\in\mathcal U^{\mathrm A}(t)}
\hat r_i^{\mathrm{ref}}(t),
\]
which provides a comparable measure of the overall service condition across timeslots. If no MU is active, the latest available value of $z(t)$ is retained, with $z(0)=0$ for initialization. To reduce short-term fluctuations, we further compute the moving average
\begin{equation}
	\bar z_H(t)=\frac{\sum_{\ell=\max\{1,t-H+1\}}^{t}\hspace{-1mm}z(\ell)}{\min\{H,t\}}
	,
	~
	G_q(t)=\mathbbm{1}_{\left\{\bar z_H(t)<\theta_q\right\}},
	\label{eq:quantile_gate}
\end{equation}
where $H$ is the observation window and $\theta_q$ is a preset risk threshold. The indicator $G_q(t)$ identifies relatively benign operating periods and is subsequently used to regulate the recovery of the cross-time risk budget. 
	
\subsection{Cross-Time Risk Regulation and Long-Term Objective}

To regulate deadline-risk accumulation across successive tasks, AEGIS maintains a per-MU risk budget $B_i(t)\in[0,B_i^{\max}]$, where $B_i^{\max}$ denotes the maximum admissible budget. The budget represents the remaining tolerance for predicted deadline risk and constrains each admitted action by $\hat r_i(\bm a(t);t)\le B_i(t)$. We initialize $B_i(1)=B_i^{\max}$.

The budget is depleted by admitted risky service and replenished as the operating condition improves. Based on the system-level indicator $G_q(t)$ introduced above, the replenishment amount is
\begin{equation}
	\rho_i(t)=\rho_0+\kappa_{\mathrm{rec}}
	\frac{B_i^{\max}-B_i(t)}{B_i^{\max}}G_q(t),
	\label{eq:adaptive_replenishment}
\end{equation}
where $\rho_0\ge0$ provides regular replenishment and $\kappa_{\mathrm{rec}}\ge0$ controls the additional recovery when the recent system risk is relatively low. The resulting budget evolution is
\begin{equation}
	{\small\begin{aligned}
		B_i(t+1)=\min\Bigg\{B_i^{\max}, B_i(t)
		-\eta_i\chi_i(t)r_i^{\mathrm{adm}}(\bm a^\star(t);t)+\rho_i(t)\Bigg\},
	\end{aligned}}
	\label{eq:budget_update_new}
\end{equation}
where $\eta_i\in(0,1]$ controls the rate of risk consumption. Thus, higher-risk admissions deplete the available budget more rapidly and tighten subsequent admission, whereas lower-risk periods replenish the budget and recover scheduling flexibility. Since rejected tasks have $x_i(t)=0$, only admitted service contributes to risk-budget expenditure.

The budget dynamics in \eqref{eq:budget_update_new} also impose a finite-horizon bound on the cumulative predicted risk admitted by each MU.

\begin{Prop}[Finite-Horizon Admitted-Risk Accounting]
	\label{prop:risk_accounting}
	For any MU $u_i$, let
	$\omega_i(t)=\chi_i(t)r_i^{\mathrm{adm}}(\bm a^\star(t);t)$
	denote its admitted predicted-risk exposure in timeslot $t$. Under the budget dynamics in \eqref{eq:budget_update_new}, the cumulative exposure over any finite horizon $\{1,\ldots,T\}$ satisfies
	\begin{equation}
		{\small\begin{aligned}
				\eta_i\sum_{t=1}^{T}\omega_i(t)
				\le B_i(1)+\sum_{t=1}^{T}\rho_i(t)-B_i(T+1)\le B_i^{\max}+\sum_{t=1}^{T}\rho_i(t).
		\end{aligned}}
		\label{eq:finite_risk_bound}
	\end{equation}
	Accordingly, its time-average exposure satisfies
	$\frac{1}{T}\sum_{t=1}^{T}\omega_i(t)
	\le\frac{1}{\eta_iT}\big(B_i^{\max}+\sum_{t=1}^{T}\rho_i(t)\big)$.
\end{Prop}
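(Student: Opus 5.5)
The argument is a telescoping sum built on a one-step inequality that removes the saturation in \eqref{eq:budget_update_new}. Since $\min\{B_i^{\max},y\}\le y$ for every real $y$, the update gives, for every $t\in\{1,\ldots,T\}$,
\begin{equation*}
	B_i(t+1)\le B_i(t)-\eta_i\omega_i(t)+\rho_i(t),
\end{equation*}
where we used $\omega_i(t)=\chi_i(t)r_i^{\mathrm{adm}}(\bm a^\star(t);t)$. Rearranging yields the per-slot accounting inequality
\begin{equation*}
	\eta_i\omega_i(t)\le B_i(t)-B_i(t+1)+\rho_i(t).
\end{equation*}
This holds regardless of whether the cap is active. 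When the cap is active, the inequality is strict; the excess replenishment is simply discarded, which only helps the bound.

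The second step is to sum the per-slot inequality over $t=1,\ldots,T$. The budget differences telescope to $B_i(1)-B_i(T+1)$, and this gives the first inequality in \eqref{eq:finite_risk_bound}.

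For the second inequality, I use $B_i(1)=B_i^{\max}$ from the initialization. I also need $B_i(T+1)\ge 0$, which I will justify by induction from the model's stipulation that $B_i(t)\in[0,B_i^{\max}]$. Concretely, suppose $B_i(t)\ge0$. If the task is admitted, the admission constraint gives $\hat r_i\le B_i(t)$, and $\eta_i\le1$, so $B_i(t)-\eta_i\omega_i(t)\ge(1-\eta_i)B_i(t)\ge0$. Adding $\rho_i(t)\ge0$ preserves nonnegativity, because $\rho_0\ge0$, $\kappa_{\mathrm{rec}}\ge0$, $G_q(t)\in\{0,1\}$ and $B_i(t)\le B_i^{\max}$. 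If the task is rejected or the MU is inactive, then $\omega_i(t)=0$ and nonnegativity is immediate. Finally, the min with $B_i^{\max}\ge0$ keeps the value nonnegative, so $B_i(t+1)\ge0$.

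Dropping the nonnegative term $B_i(T+1)$ gives the second inequality. Dividing by $\eta_iT>0$ then gives the time-average statement.

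The only real subtlety is the nonnegativity of $B_i(T+1)$. It relies on the admission constraint $\hat r_i\le B_i(t)$ together with $\eta_i\le 1$. I will state that dependence explicitly, noting that it is exactly what the model's domain assumption $B_i(t)\in[0,B_i^{\max}]$ encodes. Everything else is a direct telescoping argument.
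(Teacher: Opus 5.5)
Your proof is correct and matches the paper's argument: drop the cap in \eqref{eq:budget_update_new} to get the per-slot inequality $\eta_i\omega_i(t)\le B_i(t)-B_i(t+1)+\rho_i(t)$, then telescope over $t=1,\ldots,T$. Your inductive check that $B_i(T+1)\ge 0$, using (C2) together with $\eta_i\le 1$ and $\rho_i(t)\ge 0$, makes explicit what the paper simply takes from its stated domain $B_i(t)\in[0,B_i^{\max}]$ in order to obtain the second inequality.
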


The bound follows directly from the budget evolution in \eqref{eq:budget_update_new} by using
$\eta_i\omega_i(t)\le B_i(t)-B_i(t+1)+\rho_i(t)$
and summing over the considered horizon. Proposition~\ref{prop:risk_accounting} establishes that the accumulated predicted-risk exposure of admitted tasks is explicitly regulated by the available risk budget and its replenishment across time. This result characterizes the model-defined scheduling risk score and does not imply a probabilistic bound on realized deadline violations.

Based on the above risk-budget dynamics, we formulate the long-horizon scheduling problem to maximize the weighted timely-service performance over $\mathcal T$:
\begin{align}
	&\max_{\{\bm a(t)\}_{t\in\mathcal T}}\quad
	\sum_{t\in\mathcal T}\sum_{u_i\in\mathcal U^{\mathrm A}(t)}
	w_i\Psi_i(t)
	\label{eq:long_term_obj}\\
	\mathrm{s.t.}\quad
	&\sum_{u_i\in\mathcal U^{\mathrm A}(t)}\hspace{-2mm}b_i(t)\le B^{\mathrm{tot}},
	\hspace{-2mm}\sum_{u_i\in\mathcal U^{\mathrm A}(t)}\hspace{-2mm}f_i(t)\le F^{\mathrm{tot}}, ~\forall t\in\mathcal T
	\tag{C1}\label{con:resource}\\
	&r_i^{\mathrm{adm}}(\bm a(t);t)\le B_i(t),
	\quad \forall u_i\in\mathcal U^{\mathrm A}(t), ~\forall t\in\mathcal T
	\tag{C2}\label{con:risk}\\
	&a_i(t)\in\mathcal A_i(t),
	\quad \forall u_i\in\mathcal U^{\mathrm A}(t), ~\forall t\in\mathcal T
	\tag{C3}\label{con:action}\\
	&B_i(t+1)\ \text{satisfies \eqref{eq:budget_update_new}},
	\quad \forall u_i\in\mathcal U,~ \forall t\in\mathcal T
	\tag{C4}\label{con:budget}
\end{align}
where constraint (C1) enforces the shared bandwidth and computing capacities, while (C2) restricts each admitted task according to its current risk budget. Constraint (C3) specifies the feasible resource-allocation actions, and (C4) couples consecutive timeslots through risk consumption and replenishment. Inactive MUs are assigned the null action $(0,0)$ and do not participate in the current allocation.
The long-horizon formulation captures the cross-time impact of current scheduling decisions through the evolving risk budgets, resulting in a stochastic online scheduling problem with coupled communication, computing, and risk states. Conventional myopic or per-timeslot optimization cannot explicitly account for the future impact of current risk consumption, whereas direct full-horizon optimization requires unavailable future task arrivals, channel conditions, and edge workloads. We therefore transform the problem into a sequence of time-wise coordination decisions that jointly consider the current risk budgets and one-step state estimates, as developed in the next section.
	
\section{Design of AEGIS}

\subsection{Centralized Virtual-Agent Potential Formulation}
\label{subsec:MU_side_pg}

The ESP centrally determines admission and communication--computing resources for all active MUs. Since these decisions are coupled by shared capacities and per-MU risk budgets, directly searching the joint action space becomes costly as the number of active tasks increases. To obtain a structured online solution, the ESP associates each active MU $u_i$ with a virtual scheduling agent $v_i$ that controls only its resource-allocation coordinate. These virtual agents provide an algorithmic decomposition of the centralized problem rather than representing strategic behavior of physical MUs.
Let $\mathcal V(t)=\{v_i\mid u_i\in\mathcal U^{\mathrm A}(t)\}$ denote the active virtual-agent set. The resulting finite coordination game is $\mathcal G(t)=(\mathcal V(t),\{\mathcal A_i(t)\}_{v_i\in\mathcal V(t)},\{U_i(\cdot;t)\}_{v_i\in\mathcal V(t)})$, where $\mathcal A_i(t)$ is defined in \eqref{eq:valid_action_set}. We denote by $\mathcal A^{\mathrm{feas}}(t)$ the set of joint action profiles $\bm a(t)\in\prod_{v_i\in\mathcal V(t)}\mathcal A_i(t)$ that satisfy the shared bandwidth and computing constraints in (C1) and the risk-budget constraint in (C2) for every admitted task. Inactive MUs retain the null action $(0,0)$ and do not participate in the current coordination.

For each feasible joint action, the ESP evaluates the following time-wise coordination potential:
\begin{equation}
	{\small\begin{aligned}
		\Phi(\bm a(t);t)\hspace{-.8mm}=\hspace{-4.5mm}
		\sum_{u_i\in\mathcal U^{\mathrm A}(t)}\hspace{-1mm}
		\Bigg[\hspace{-.8mm}
		w_i\hat s_i(\bm a(t);t)
		\hspace{-.8mm}-\hspace{-.8mm}\alpha_i b_i(t)\hspace{-.8mm}-\hspace{-.8mm}\beta_i f_i(t)\hspace{-.8mm}-\hspace{-.8mm}\frac{\gamma_ir_i^{\mathrm{adm}}(\bm a(t);t)}{B_i(t)+\epsilon}\hspace{-.8mm}
		\Bigg],
	\end{aligned}}
	\label{eq:Phi_on_MU}
\end{equation}
where the first term rewards predicted timely service, $\alpha_i$ and $\beta_i$ penalize bandwidth and computing consumption, and $\gamma_i$ weights predicted-risk exposure relative to the remaining budget. Thus, $\Phi(\bm a(t);t)$ balances service timeliness, resource consumption, and cross-time risk pressure within each timeslot.
We define the payoff of virtual agent $v_i$ by its marginal contribution to the system potential:
\begin{equation}
	U_i(\bm a(t);t)
	=
	\Phi(\bm a(t);t)
	-
	\Phi\big((0,0),\bm a_{-i}(t);t\big).
	\label{eq:Ui_marginal_MU_on}
\end{equation}
Because a unilateral resource change must preserve the shared constraints, the feasible action set of $v_i$ given $\bm a_{-i}(t)$ is
\begin{equation}
	{\small\begin{aligned}
		\mathcal A_i^{\mathrm{feas}}(\bm a_{-i}(t);t)
		=
		\left\{
		a_i\in\mathcal A_i(t)\mid
		(a_i,\bm a_{-i}(t))\in\mathcal A^{\mathrm{feas}}(t)
		\right\}.
	\end{aligned}}
	\label{eq:Ai_feas_MU_on}
\end{equation}
A feasible joint action $\bm a^\star(t)$ is an equilibrium if no virtual agent can improve its payoff through a feasible unilateral change, i.e.,
$U_i(\bm a^\star(t);t)\ge U_i(a_i,\bm a_{-i}^\star(t);t)$ for all $a_i\in\mathcal A_i^{\mathrm{feas}}(\bm a_{-i}^\star(t);t)$. Such an equilibrium corresponds to a coordinate-wise local optimum of the centralized time-wise assignment, which provides the basis for the iterative solution developed next.
	
\subsection{Potential Property and Equilibrium Existence}

The marginal-contribution payoff in \eqref{eq:Ui_marginal_MU_on} aligns each feasible unilateral update with the centralized coordination potential, enabling an asynchronous improvement procedure for the time-wise scheduling problem.

\begin{thm}[Potential Property under Feasible Unilateral Deviations]
	\label{thm:MU_exact_potential}
	For any two feasible joint actions that differ only in the action of one virtual agent $v_i$, the corresponding payoff and potential variations satisfy
	\[
	U_i(\bm a'(t);t)-U_i(\bm a(t);t)
	=
	\Phi(\bm a'(t);t)-\Phi(\bm a(t);t).
	\]
	Hence, $\Phi$ is an exact potential over feasible unilateral deviations. Since $\mathcal A^{\mathrm{feas}}(t)$ is finite, at least one pure-strategy equilibrium under feasible unilateral deviations exists, and every strict feasible-improvement sequence terminates in finitely many updates~\cite{ding2021potential}.
\end{thm}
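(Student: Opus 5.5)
The plan is to verify the identity directly from the marginal-contribution definition \eqref{eq:Ui_marginal_MU_on}, and then get existence and finite termination from finiteness of $\mathcal A^{\mathrm{feas}}(t)$.

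\emph{Step 1: the payoff identity.} Let $\bm a(t)=(a_i,\bm a_{-i}(t))$ and $\bm a'(t)=(a_i',\bm a_{-i}(t))$ be feasible and differ only in the coordinate of $v_i$. By \eqref{eq:Ui_marginal_MU_on},
\[
U_i(\bm a'(t);t)-U_i(\bm a(t);t)=\Phi(\bm a'(t);t)-\Phi(\bm a(t);t)-\Big[\Phi\big((0,0),\bm a_{-i}(t);t\big)-\Phi\big((0,0),\bm a_{-i}(t);t\big)\Big].
\]
The bracketed term is $\Phi$ evaluated at the same profile twice, so it vanishes identically. This gives the claimed identity.

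\emph{Step 2: well-definedness of the reference term.} I must check that $\Phi((0,0),\bm a_{-i}(t);t)$ is well defined. Every term of \eqref{eq:Phi_on_MU} is defined for any profile in $\prod_{v_i}\mathcal A_i(t)$, since the null action contributes $\hat s_i=r_i^{\mathrm{adm}}=0$ and $B_i(t)+\epsilon>0$. Hence no feasibility of the reference profile is needed. It is nonetheless feasible: switching $v_i$ to $(0,0)$ only lowers $\sum b$ and $\sum f$ in (C1). It also leaves (C2) unchanged for the other agents, because each $\hat r_j$ depends only on $a_j$ through $b_j,f_j$ and the estimated states. This observation is also what guarantees $\mathcal A_i^{\mathrm{feas}}(\bm a_{-i};t)\neq\emptyset$ at any feasible $\bm a_{-i}$, and that $\mathcal A^{\mathrm{feas}}(t)\neq\emptyset$, since the all-null profile is feasible.

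\emph{Step 3: existence of an equilibrium.} The candidate sets $\tilde{\mathcal B}_i^+$ and $\tilde{\mathcal F}_i^+$ are finite and $\mathcal V(t)$ is finite, so $\mathcal A^{\mathrm{feas}}(t)$ is finite and nonempty. Take $\bm a^\star\in\arg\max_{\mathcal A^{\mathrm{feas}}(t)}\Phi$. For any $a_i\in\mathcal A_i^{\mathrm{feas}}(\bm a^\star_{-i};t)$, the profile $(a_i,\bm a^\star_{-i})$ is feasible, so Step 1 gives $U_i(a_i,\bm a^\star_{-i})-U_i(\bm a^\star)=\Phi(a_i,\bm a^\star_{-i})-\Phi(\bm a^\star)\le0$. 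Hence $\bm a^\star$ is an equilibrium under feasible unilateral deviations.

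\emph{Step 4: finite termination of improvement paths.} Along a strict feasible-improvement sequence, each step strictly increases $U_i$ for the moving agent, hence strictly increases $\Phi$ by Step 1. No profile can therefore repeat. Since the profile space is finite, the sequence terminates after at most $|\mathcal A^{\mathrm{feas}}(t)|-1$ updates, and its terminal point is an equilibrium by definition. This is the standard finite-potential-game argument of \cite{ding2021potential}.

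The only delicate point is Step 2. The potential property must be stated relative to feasible deviations, since the coupling constraints make this a generalized game. So I must argue that restricting deviations to $\mathcal A^{\mathrm{feas}}$ keeps both compared profiles feasible, and that the subtracted baseline depends only on $\bm a_{-i}$. Everything else is bookkeeping.
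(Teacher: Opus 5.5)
Your proposal is correct and follows the paper's proof: the identity holds because the reference term $\Phi((0,0),\bm a_{-i}(t);t)$ cancels under a unilateral deviation, and existence and finite termination follow from the finiteness of $\mathcal A^{\mathrm{feas}}(t)$. Your Steps 2--4 spell out what the paper leaves to the standard finite-potential-game argument: feasibility of the null-reference profile, the global maximizer of $\Phi$ as an equilibrium, and the bound of $|\mathcal A^{\mathrm{feas}}(t)|-1$ updates from the fact that no profile can repeat.
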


\begin{proof}
	The equality follows directly from the marginal-contribution payoff in \eqref{eq:Ui_marginal_MU_on}, since the reference term $\Phi((0,0),\bm a_{-i}(t);t)$ is unchanged under a unilateral deviation of $v_i$. Therefore, every strict feasible payoff improvement produces the same increase in $\Phi$. The remaining results follow from the finiteness of $\mathcal A^{\mathrm{feas}}(t)$.
\end{proof}

Theorem~\ref{thm:MU_exact_potential} establishes that the centralized time-wise assignment can be solved through finite asynchronous coordinate improvements, providing the theoretical basis for the online scheduler developed next.
	
\subsection{Online Scheduling Procedure of AEGIS}

\begin{algorithm}[t!]
	{\footnotesize \setstretch{0.54}
		\caption{AEGIS Online Scheduling}
		\label{Alg:AEGISOn}
		\LinesNumbered
		
		\textbf{Input:}
		$\mathcal U^{\mathrm A}(t)$, $\{\mathcal A_i(t)\}$, $B^{\mathrm{tot}}$, $F^{\mathrm{tot}}$, latest channel/load observations, current budgets $\{B_i(t)\}$, model parameters, $\theta_q$, $\varepsilon'$, and $K_{\max}$.
		
		Form $\hat g_i(t)$ and $\hat q(t)$, evaluate $G_q(t)$, and construct $\hat r_i(\bm a(t);t)$, $\hat s_i(\bm a(t);t)$, $\mathcal A^{\mathrm{feas}}(t)$, and $\Phi(\bm a(t);t)$.
		
		Initialize $k\leftarrow0$ and $\bm a^{(0)}(t)=\bm 0$.
		
		\While{$k<K_{\max}$}{
			For each $v_i\in\mathcal V(t)$, find
			$\tilde a_i^{(k)}(t)\in\arg\max_{a_i\in\mathcal A_i^{\mathrm{feas}}(\bm a_{-i}^{(k)}(t);t)}
			\Phi(a_i,\bm a_{-i}^{(k)}(t);t)$
			and compute
			$\Delta_i^{(k)}(t)=\Phi(\tilde a_i^{(k)}(t),\bm a_{-i}^{(k)}(t);t)-\Phi(\bm a^{(k)}(t);t)$.
			
			Set $\mathcal I^{(k)}(t)=\{i\mid\Delta_i^{(k)}(t)>\varepsilon'\}$.
			
			\If{$\mathcal I^{(k)}(t)=\varnothing$}{
				\textbf{break}
			}
			
			Select $i^\star\in\arg\max_{i\in\mathcal I^{(k)}(t)}\Delta_i^{(k)}(t)$.
			
			Update $a_{i^\star}^{(k+1)}(t)=\tilde a_{i^\star}^{(k)}(t)$ and retain all other actions.
			
			Set $k\leftarrow k+1$.
		}
		
		Set $\bm a^\star(t)=\bm a^{(k)}(t)$ and update $\{B_i(t+1)\}$ using \eqref{eq:budget_update_new}.
		
		\textbf{Output:}
		$\bm a^\star(t)$ and $\{B_i(t+1)\}_{u_i\in\mathcal U}$.
	}
\end{algorithm}

Algorithm~\ref{Alg:AEGISOn} summarizes the online implementation of AEGIS in each timeslot. The ESP first forms the one-step channel and background-load estimates, evaluates the system-level risk indicator, and constructs the predicted deadline-risk scores, feasible action space, and system potential. The joint allocation is initialized with the feasible null action $\bm a^{(0)}(t)=\bm 0$. At iteration $k$, each virtual agent $v_i$ keeps the allocations of the other MUs fixed and enumerates its feasible candidate actions in $\mathcal A_i^{\mathrm{feas}}(\bm a_{-i}^{(k)}(t);t)$. The ESP selects the candidate $\tilde a_i^{(k)}(t)$ that maximizes the resulting system potential and computes its improvement $\Delta_i^{(k)}(t)$. After evaluating all virtual agents, those satisfying $\Delta_i^{(k)}(t)>\varepsilon'$ constitute the improving set $\mathcal I^{(k)}(t)$. If $\mathcal I^{(k)}(t)$ is nonempty, the ESP accepts the update with the largest potential gain while keeping all other coordinates unchanged, and then proceeds to the next iteration.
The coordinate-update process continues until no feasible action provides an improvement larger than $\varepsilon'$ or the maximum number of iterations $K_{\max}$ is reached. The resulting joint action $\bm a^\star(t)$ specifies the admission, bandwidth, and computing decisions for the current timeslot, after which the corresponding risk budgets are updated according to \eqref{eq:budget_update_new}. Since every candidate action is evaluated within the current feasible set, all intermediate allocations satisfy the bandwidth, computing, and risk-budget constraints. Moreover, each accepted update strictly increases $\Phi$. Therefore, when $\varepsilon'=0$ and the procedure is not truncated by $K_{\max}$, Theorem~\ref{thm:MU_exact_potential} guarantees finite-step convergence to a pure-strategy equilibrium under feasible unilateral deviations.

Let $M_i=|\mathcal A_i(t)|$ denote the number of candidate actions of $v_i$. In each iteration, AEGIS examines at most $\sum_{v_i\in\mathcal V(t)}M_i$ feasible candidate actions. Hence, under the iteration limit $K_{\max}$, the online search requires at most $K_{\max}\sum_{v_i\in\mathcal V(t)}M_i$ candidate-potential evaluations. The one-step state estimation and system-level risk evaluation incur an additional $\mathcal O(|\mathcal V(t)|)$ operations per timeslot.
	
	\section{Evaluation}
We conduct simulations to evaluate the effectiveness of \textit{AEGIS}. All experiments are implemented in Python 3.10 on a 12th Gen Intel Core i9-12900H processor.

\subsection{Experimental Setup, Benchmarks, and Evaluation Metrics}

We consider a dynamic continuous edge-inference system with $N\in\{20,40,60,80,100\}$ MUs over $T=180$ one-second timeslots. Each episode consists of a nominal period (slots 1--60), a joint communication--computing stress period (slots 61--120), and a recovery period (slots 121--180). During stress, the channel power gains are attenuated by $6$ dB and the background computing workload is increased by a factor of $1.8$; the nominal processes resume afterward without resetting the risk budgets. We consider two resource regimes: \textit{fixed total resources} with $B^{\mathrm{tot}}=50$ MHz and $F^{\mathrm{tot}}=155$ GHz, and \textit{resources scaled with users} with $B^{\mathrm{tot}}(N)=50(N/40)$ MHz and $F^{\mathrm{tot}}(N)=155(N/40)$ GHz. In the latter case, the background workload is scaled by the same factor to preserve a comparable per-user operating condition. For each active MU, $L_i(t)\sim\mathcal{U}[0.12,0.90]$ MB, $C_i(t)\sim\mathcal{U}[0.08,0.95]\times10^9$ cycles, and $D_i^{\max}(t)\sim\mathcal{U}[0.28,0.82]$ s~\cite{qi2026Bank,3GPP2022,liu2023resource}. MU activity follows $\chi_i(t)\sim\mathrm{Bernoulli}(p_i)$, where $p_i\sim\mathrm{Beta}(5,4)$ is clipped to $[0.25,0.82]$. The cell radius is $250$ m with a minimum MU--ESP distance of $20$ m, and MU mobility follows a bounded random-waypoint model with speeds in $[1,8]$ m/s. We set the transmit power to $0.20$ W, noise PSD to $-174$ dBm/Hz, $g_0=10^{-4}$, $\alpha_{\mathrm{pl}}=3.5$, and $\varrho=0.95$. To capture temporal variations in edge congestion, the background workload is modeled as a two-state autoregressive process centered at $2.4\times10^{10}$ and $7.5\times10^{10}$ cycles for low- and high-load conditions, respectively, with autoregressive coefficient $0.88$, state-transition probabilities $0.06$ and $0.18$, and workload range $[5\times10^9,1.15\times10^{11}]$ cycles. Bandwidth and computing allocations are discretized as $\{1,2,4,6,8,10\}$ MHz and $\{2,4,6,8,10,12\}$ GHz, respectively, and we set $\alpha_i=0.040$, $\beta_i=0.015$, and $\gamma_i=0.25$. The service weight $w_i$ is sampled from $\{1.0,1.5,2.0\}$ with probabilities $\{0.5,0.3,0.2\}$, while the risk-sensitivity coefficient, improvement threshold, and iteration limit are set as $\kappa_i=6.0$, $\varepsilon'=10^{-12}$, and $K_{\max}=120$, respectively. For the quantile gate, we use $q=0.75$ with the corresponding threshold $\theta_q=0.5908$, which remains fixed across different system scales and resource regimes. All methods are evaluated over 30 matched seeds under identical task, mobility, channel, workload, and stress realizations.

We compare AEGIS with six representative schemes: \textit{(i) SLO-Edge}, an SLO-aware priority heuristic inspired by~\cite{zhang2026enabling}; \textit{(ii) ChannelFirst}, a channel-aware heuristic inspired by the instantaneous-efficiency principle in~\cite{liu2023resource}; \textit{(iii) EqualShare}, which continuously divides the available bandwidth and computing capacity among active MUs; \textit{(iv) Lyapunov-DPP}, a drift-plus-penalty controller using the same resource-action space and virtual risk debt $Z_i(t+1)=\max\{0,Z_i(t)+x_i(t)\hat r_i(t)-0.2\}$ with weight $1.6$; \textit{(v) AEGIS-FixedRefill}, which removes adaptive budget recovery by setting $\kappa_{\mathrm{rec}}=0$; and \textit{(vi) AEGIS-NoBudget}, which removes the cross-time risk budget while retaining the same state estimates and resource coordination procedure. These schemes cover priority-based scheduling, equal sharing, long-term risk control, and mechanism-level ablations of the proposed cross-time risk regulation.
We evaluate all methods using six complementary service, risk, and resilience metrics. \textit{Timely inference ratio (TIR)} is the fraction of active tasks completed within their deadlines, with rejected tasks counted as unsuccessful; \textit{average predicted deadline-risk score (APR)} measures the mean normalized risk of admitted tasks; \textit{admission ratio (AR)} is the fraction of active tasks admitted for edge execution; and \textit{deadline-violation burst length (DVBL)} measures the average persistence of consecutive service failures for the same MU.
To characterize recovery from transient stress, let $\bar y_m(t)$ denote the 10-slot moving-average TIR of method $m$. We use $\mathcal T^{\mathrm{pre}}=\{41,\ldots,60\}$ as the pre-stress reference interval, $\mathcal T^{\mathrm{rec}}=\{121,\ldots,180\}$ as the recovery interval, and $\mathcal T^{\mathrm{sr}}=\{61,\ldots,180\}$ as the stress--recovery interval. The corresponding pre-stress and recovery service levels are $y_m^{\mathrm{pre}}=|\mathcal T^{\mathrm{pre}}|^{-1}\sum_{t\in\mathcal T^{\mathrm{pre}}}\bar y_m(t)$ and $y_m^{\mathrm{rec}}=|\mathcal T^{\mathrm{rec}}|^{-1}\sum_{t\in\mathcal T^{\mathrm{rec}}}\bar y_m(t)$. We define the \textit{RecoveryRatio (RR)} as $\mathrm{RR}_m=y_m^{\mathrm{rec}}/(y_m^{\mathrm{pre}}+\epsilon_y)$ and the \textit{normalized service-degradation area (NSDA)} as $\mathrm{NSDA}_m=|\mathcal T^{\mathrm{sr}}|^{-1}\sum_{t\in\mathcal T^{\mathrm{sr}}}\max\{0,1-\bar y_m(t)/(y_m^{\mathrm{pre}}+\epsilon_y)\}$, where $\epsilon_y>0$ avoids numerical singularity. Higher TIR, AR, and RR indicate stronger service capability and recovery, whereas lower APR, DVBL, and NSDA indicate lower risk exposure and better degradation containment. All statistics and confidence intervals are computed at the matched-seed level.
	
\subsection{Performance Evaluation}
\begin{figure}[!t]
	\vspace{-0.0cm}
	\setlength{\abovecaptionskip}{-1 mm}
	\centering
	\includegraphics[width=0.485\textwidth]{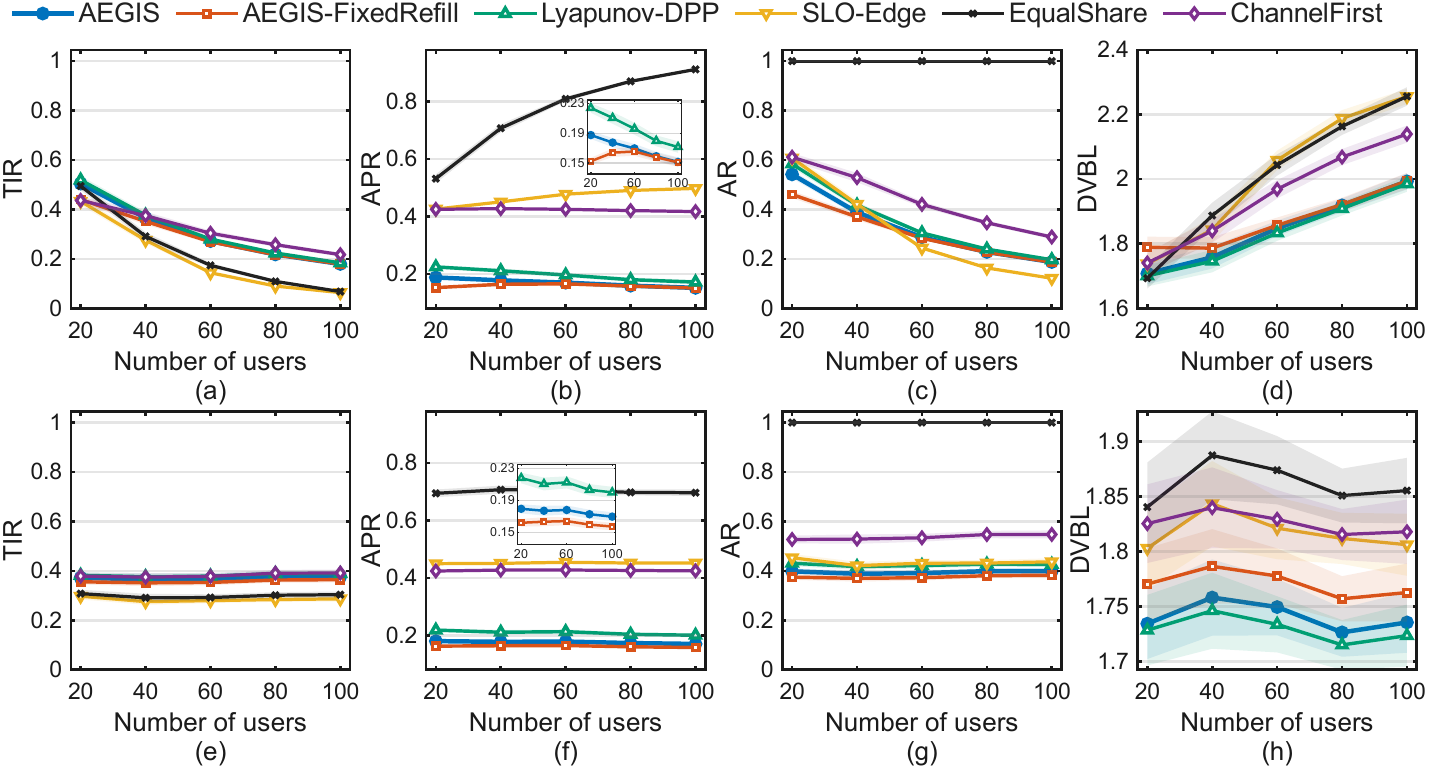}
\caption{Service--risk scaling with increasing MU population under fixed total resources in (a)--(d) and proportionally scaled resources in (e)--(h), with the four columns reporting TIR, APR, AR, and DVBL and shaded bands indicating 95\% confidence intervals.}
	\vspace{-0.4cm}
	\label{fig:scaling_results}
\end{figure}
We first examine the service--risk behavior under increasing system scale. Fig.~\ref{fig:scaling_results} reports TIR, APR, AR, and DVBL under two resource regimes. Panels (a)--(d) use fixed total resources, whereas panels (e)--(h) scale bandwidth and computing capacity proportionally with the MU population. All curves are averaged over 30 matched seeds with two-sided 95\% Student-$t$ confidence intervals.
Under fixed total resources, increasing the number of MUs intensifies communication--computing contention, leading to lower TIR and longer DVBL. AEGIS maintains higher TIR and markedly lower APR than SLO-Edge and EqualShare across the evaluated scales. ChannelFirst often achieves higher TIR by favoring strong channels, but incurs substantially larger APR. Lyapunov-DPP also provides slightly higher TIR, whereas AEGIS consistently operates at lower APR, reflecting a clear risk--service tradeoff. Compared with AEGIS-FixedRefill, AEGIS improves service continuity and reduces violation persistence at the cost of moderately higher APR.
When resources scale with the MU population, the four metrics remain comparatively stable as the number of MUs increases, indicating that proportional resource and workload scaling preserves an approximately scale-neutral operating condition. The relative behavior of the competing methods also remains consistent: AEGIS retains substantially lower APR than ChannelFirst and Lyapunov-DPP while maintaining competitive TIR and DVBL.

\begin{figure}[!t]
	\vspace{-0.0cm}
	\setlength{\abovecaptionskip}{-1 mm}
	\centering
	\includegraphics[width=\columnwidth]{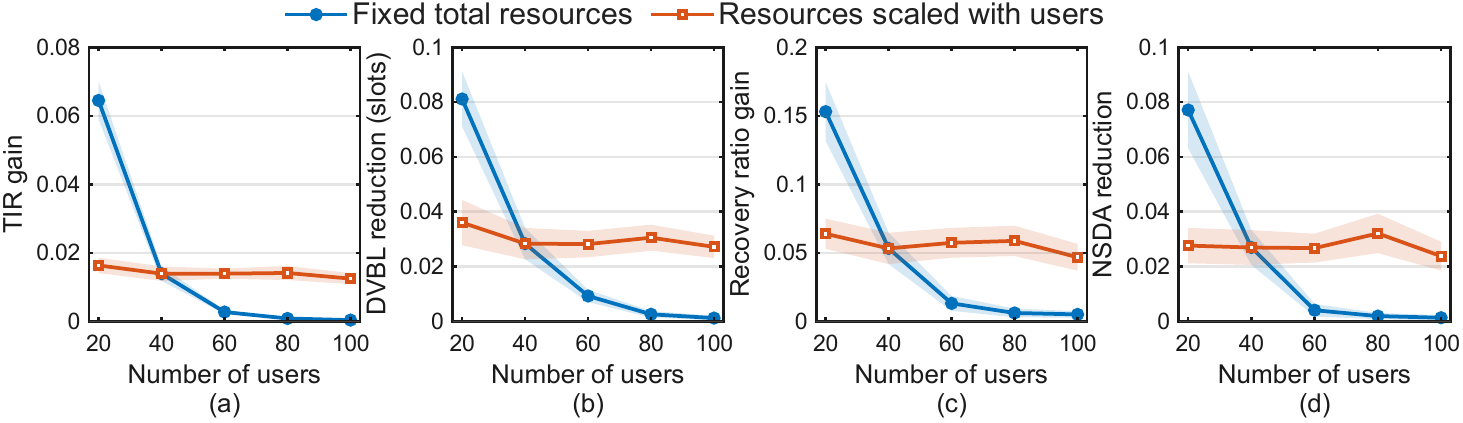}
\caption{Matched paired effects of AEGIS relative to AEGIS-FixedRefill for (a) TIR, (b) DVBL, (c) RR, and (d) NSDA, where positive values favor AEGIS and shaded bands denote two-sided 95\% Student-$t$ confidence intervals over 30 paired seeds.}
	\vspace{-0.4cm}
	\label{fig:dynamic_budget_effects}
\end{figure}

We next examine the effect of adaptive replenishment through matched comparisons with AEGIS-FixedRefill. Fig.~\ref{fig:dynamic_budget_effects} reports the paired TIR gain, DVBL reduction, RR gain, and NSDA reduction, with positive values favoring AEGIS.
As shown in Fig.~\ref{fig:dynamic_budget_effects}, adaptive replenishment consistently improves service continuity and degradation containment. Under fixed resources, the gain is most pronounced at moderate scale and diminishes as severe resource scarcity becomes dominant. Under proportionally scaled resources, the benefit remains visible across all evaluated population sizes, indicating that the mechanism is not tied to a specific congestion level.

\begin{table}[t]
	\caption{Paired comparison of AEGIS across ten scale--resource settings, where each entry reports the number of settings with a favorable paired 95\% confidence interval.}
	\label{tab:paired_mechanism_effects}
	\centering
	\scriptsize
	\setlength{\tabcolsep}{3.2pt}
	\renewcommand{\arraystretch}{1.08}
	\begin{tabular}{lccccc}
		\toprule
		Baseline & TIR$\uparrow$ & APR$\downarrow$ & DVBL$\downarrow$ & RR$\uparrow$ & NSDA$\downarrow$ \\
		\midrule
		AEGIS-FixedRefill & 10 & 0 & 10 & 10 & 10 \\
		AEGIS-NoBudget    & 0  & 10 & 0  & 0  & 0  \\
		Lyapunov-DPP      & 0  & 10 & 0  & 0  & 1  \\
		\bottomrule
	\end{tabular}\vspace{-0.4cm}
\end{table}

Table~\ref{tab:paired_mechanism_effects} summarizes the paired significance across the ten scale--resource settings. AEGIS consistently improves TIR, DVBL, RR, and NSDA over AEGIS-FixedRefill, whereas the fixed-refill variant remains more conservative in APR. Against AEGIS-NoBudget and Lyapunov-DPP, the most consistent advantage of AEGIS is lower APR, while the competing schemes retain slightly stronger service performance. These results further confirm the intended tradeoff between service continuity and cross-time risk control.

Overall, the results show that AEGIS does not seek uniform dominance across all metrics. Adaptive replenishment improves continuity and degradation recovery relative to fixed refill, while the cross-time budget limits risk exposure relative to more aggressive alternatives. AEGIS therefore provides a balanced operating point between timely-service performance and persistent risk control.

\section{Conclusion}

We investigated service-level operational resilience for continuous edge inference under dynamic communication--computing conditions. We proposed AEGIS, which combines cross-time risk regulation with joint bandwidth--computing orchestration and establishes a finite-horizon bound on cumulative predicted-risk exposure. An exact-potential formulation further enables finite-step asynchronous resource coordination. Experiments show that AEGIS improves service continuity and degradation containment through adaptive risk-budget recovery, while maintaining lower predicted-risk exposure than more aggressive scheduling schemes. Overall, AEGIS provides a balanced approach to timely-service preservation and cross-time risk control for resilient continuous edge intelligence.
	
	\begin{spacing}{0.98}
		\bibliographystyle{ieeetr}
		\bibliography{reference}
	\end{spacing}
	
\end{document}